\newcommand{\sir}{\mathrm{SIR}}
\newcommand{\Pb}{\mathbb{P}}
\newcommand*{\acro}[3][]{\newacronym[#1]{#2}{#2}{#3}}
\newtheorem{remark}{Remark}
\theoremstyle{approximation}
\newcommand{\black}{\textcolor{black}}
\let\mybibitem\bibitem
\renewcommand{\bibitem}[1]{%
  \ifstrequal{#1}{nature}
    {\color{blue}\mybibitem{#1}}
    {\color{black}\mybibitem{#1}}%
}
\begin{document}
\title{Optimizing Joint Probabilistic Caching and Channel Access for Clustered {D2D} Networks}
 

 \author{Ramy Amer,~\IEEEmembership{Student~Member,~IEEE,} M. Majid Butt,~\IEEEmembership{Senior,~IEEE,} 
~and~Nicola~Marchetti,~\IEEEmembership{Senior~Member,~IEEE}
 
\thanks{Ramy Amer and Nicola~Marchetti are with CONNECT Centre for Future Networks, Trinity College Dublin, Ireland. Email:\{ramyr, nicola.marchetti\}@tcd.ie.}
\thanks{M. Majid Butt is with Nokia Bell Labs, France and Trinity College Dublin, Ireland. Email: majid.butt@nokia-bell-labs.com.}
\thanks{This publication has emanated from research conducted with the financial support of Science Foundation Ireland (SFI) and is co-funded under the European Regional Development Fund under Grant Number 13/RC/2077.}
} 
\maketitle

\maketitle			
\begin{abstract} 
Caching at mobile devices and leveraging device-to-device (D2D) communication are two promising approaches to support massive content delivery over wireless networks. Analysis of such D2D caching networks based on a physical interference model is usually carried out by assuming uniformly distributed devices. However, this approach does not capture the notion of device clustering. In this regard, this paper proposes a joint communication and caching optimization framework for clustered D2D networks. Devices are spatially distributed into disjoint clusters and are assumed to have a surplus memory that is utilized to proactively cache files, following a random probabilistic caching scheme. The cache offloading gain is maximized by jointly optimizing channel access and caching scheme. A closed-form caching solution is obtained and bisection search method is adopted to heuristically obtain the optimal channel access probability. Results show significant improvement in the offloading gain reaching up to $10\%$ compared to the Zipf caching baseline.  
\end{abstract}
\begin{IEEEkeywords}
\ac{D2D} communication, caching, offloading gain, channel access.
\end{IEEEkeywords}
\vspace{-0.3 cm}
\section{Introduction}
Caching at mobile devices significantly improves system performance by facilitating \ac{D2D} communications, which enhances the spectrum efficiency and alleviate the heavy burden on backhaul links \cite{Femtocaching}. There are two main approaches for content placement in the literature, deterministic and probabilistic. For deterministic placement, files are cached and optimized for specific networks in a deterministic manner \cite{Femtocaching,golrezaei2014base,8412262}. However, in practice, the wireless channels and the geographic distribution of devices are time-variant. This triggers the optimal content placement strategy to be frequently updated, which makes the content placement quite complex. To cope with this problem, probabilistic content placement is proposed whereby each device randomly caches a subset of the content with a certain caching probability in stochastic networks \cite{chen2017probabilistic}. 
In this paper, we focus on the probabilistic content placement problem.

Modeling of wireless caching networks also follows two main directions in the current state-of-art. The first line of work focuses on the fundamental scaling results by assuming a simple protocol channel model \cite{Femtocaching,golrezaei2014base,8412262}, known as the protocol model.  This model assumes that two devices can always communicate if they are within a certain distance. The second line of work, which is similar to the one adopted in this paper, considers a more realistic model for the underlying physical layer \cite{andreev2015analyzing}. This is commonly defined as the physical interference model. 

The analysis of wireless caching networks that underlies a physical interference model, is commonly conducted by means of stochastic point processes. For instance, modeling device locations as a \ac{PPP}  is a widely adopted  approach in the wireless caching area \cite{andreev2015analyzing,amer2020caching}. However, while the PPP model is tractable, a realistic model for \ac{D2D} caching networks needs to capture the notion of clustering. In particular, in clustered \ac{D2D} networks, each device has multiple proximate devices, where any of them can act as a serving device. Such deployments can be characterized by cluster processes \cite{haenggi2012stochastic}.

The performance of clustered D2D caching networks is studied in \cite{clustered_twc} and \cite{8374852}. For instance, the authors in \cite{clustered_twc} discussed different strategies of content placement in a \ac{PCP} deployment. Moreover, the authors in \cite{8374852} proposed cooperation among the D2D transmitters and probabilistic caching strategies to save the energy cost of content providers, where the location of these providers is modeled by a Gauss-Poisson process. However, while the works in \cite{clustered_twc,8374852} studied clustered D2D networks from different perspectives, the joint optimization of caching and communication for clustered \ac{D2D} networks has not been addressed yet in the literature.

Compared with this prior art \cite{andreev2015analyzing,amer2020caching,haenggi2012stochastic,clustered_twc,8374852}, in this paper we study the content placement and delivery for a network wherein cache-enabled devices are spatially distributed into disjoint clusters. We conduct a performance analysis and joint optimization of channel access and probabilistic content placement aiming to maximize the cache offloading gain. We characterize the optimal content placement as a function of the system parameters, and propose a heuristic approach to obtain the optimal channel access probability. Our results reveal that the optimal caching scheme heavily depends on the channel access probability and the geometry of the network. \emph{Overall, joint optimization of content placement and communication, e.g., channel access, is shown to be vital to enhance the performance of wireless caching networks.}   

\vspace{-0.4 cm}
\section{System Model}
\subsection{System Setup}
We model the location of mobile devices with a \ac{TCP}. The \ac{TCP} is composed of the parent points, which are drawn from a \ac{PPP} $\Phi_p$ with density $\lambda_p$, and the daughter points that are drawn from a Gaussian \ac{PPP} around each parent point \cite{haenggi2012stochastic}. In particular, the daughter points  are normally scattered with variance $\sigma^2 \in \mathbb{R}$ around each parent point. The parent points and offspring are referred to as cluster centers and cluster members, respectively. By the \ac{TCP} definition, the number of devices per cluster is a Poisson \ac{RV} with mean $\overline{n}$. Therefore, the density function of a cluster member  location relative to its cluster center is
\begin{equation}
f_Y(y) = \frac{1}{2\pi\sigma^2}\textrm{exp}\Big(-\frac{\lVert y\rVert^2}{2\sigma^2}\Big),	\quad \quad y \in \mathbb{R}^2
\label{pcp}
\end{equation}
where $\lVert .\rVert$ is the Euclidean norm. The intensity function of a cluster is given by $\lambda_c(y) = \frac{\overline{n}}{2\pi\sigma^2}\textrm{exp}\big(-\frac{\lVert y\rVert^2}{2\sigma^2}\big)$, and therefore, the intensity of the entire process is given by $\lambda = \overline{n}\lambda_p$. 

We assume that the \ac{D2D} communication is operating as out-of-band D2D under flat Rayleigh fading channels.   \ac{D2D} communication is enabled within each cluster to deliver popular content. It is assumed that the devices adopt a slotted-ALOHA medium access protocol, where each transmitter during each time slot, independently and randomly accesses the channel with the same probability $q$. One can alternatively assume that each device makes a coin flip at each time about whether or not it accesses a shared-channel. This allows us to define a Bernoulli process $N_y$ with the probability that a device located at $y$ accesses a channel being $\Pb(N_y)=q$. The key advantage of adopting slotted-ALOHA is that it is a simple yet fundamental \ac{MAC} protocol, where there is no need for a central controller to schedule the users' transmissions. Moreover, despite the vast amount of existing studies on \ac{MAC} protocols, only variations of ALOHA and CSMA are still used in the majority of technologies being adopted for the Internet of Things \cite{7835337}. According to this access model, multiple active \ac{D2D} links might coexist within a cluster. Therefore, $q$ is a design parameter that directly controls intra- as well as inter-cluster interference, as described later.


If a requesting device  caches the desired content, the device directly retrieves the content. However, if  the content is not locally cached, it can be downloaded from a randomly selected neighboring device that caches the file within the same cluster, henceforth called \emph{catering device}. This catering device is, in turn, admitted to access the channel according to the proposed slotted-ALOHA protocol. Finally, the device attaches to the nearest \ac{BS} as a last resort to download the content, in the case it is not cached within the device cluster. 
Since there are memory and battery consumption costs borne by a catering device, the geographically closest device may not want to participate in the content caching and/or delivery. Hence, randomizing the catering device reflects the possibility of being served by a distant device that is willing to participate in the content delivery, while not necessarily being the nearest one. Note that this assumption is commonly adopted in the literature \cite{clustered_twc} and \cite{8374852}. 

\subsection{Content Popularity and Caching}
We assume that each device has a surplus memory of size $M$ designated for caching files. The total number of files is $N_f> M$, and the set (library) of content indices is denoted as $\mathcal{F} = \{1, 2, \dots , N_f\}$. These files represent the content catalog that all devices in a cluster may request, which are indexed in a descending order of popularity. The probability that the $i$-th file is requested follows a Zipf's distribution given by,
\begin{equation}
p_i = \frac{ i^{-\beta} }{\sum_{k=1}^{N_f}k^{-\beta}},
\label{zipf}
\end{equation}
where $\beta$ is a parameter that reflects how skewed the popularity distribution is. For example, if $\beta= 0$, the popularity of the files has a uniform distribution. Increasing $\beta$ increases the disparity among the files' popularity such that lower indexed files have higher popularity. By definition, $\sum_{i=1}^{N_f}p_i = 1$. 
We use the Zipf's distribution to model the popularity of files per cluster \cite{Delay-Analysis,8886101,amer2018optimizing,amer2019performance,amer20200caching,chaccour2019Reliability}.

We adopt a random content placement where each device independently and probabilistically selects a file to cache according to the function $\boldsymbol{b} = \{b_1, b_2, \dots, b_{N_{f}}\}$, where $b_i$ is the probability that a device caches the $i$-th file, $0 \leq b_i \leq 1$ for all $i\in\{1, \dots, N_f\}$. To avoid duplicate caching of the same content within the memory of a device, we follow a probabilistic caching approach with $\sum_{i=1}^{N_f}b_i=M$. 

%

Next, we proceed with the rate coverage analysis to obtain the offloading gain, which is a key performance metric for D2D caching networks \cite{cache_schedule}. Particularly, the offloading gain is defined as the probability of obtaining a requested file from the local cluster, either via self-cache or from a neighboring device in the same cluster, with a received \ac{SIR} higher than a required threshold $\vartheta$. 
\section{Rate Coverage Analysis} 
We conduct the next analysis for a cluster whose center is assumed at $x_0\in \Phi_p$, referred to as representative cluster. The device requesting a content in this cluster, henceforth called typical device, is located at the origin. We denote the location of the catering device by $y_0$ relative to $x_0$, where $x_0, y_0\in \mathbb{R}^2$. The distance between the typical and catering devices is denoted as $r=\lVert x_0+y_0\rVert$, which is a realization of a \ac{RV} $R$ whose distribution is described later. Having explained the channel access and the  random selection of catering devices, the offloading gain can be expressed as 
 \begin{align}
 \Pb_o(q,\boldsymbol{b}) &= \sum_{i=1}^{N_f} p_i b_i + p_i(1 - b_i)(1 - e^{-b_i\overline{n}}) \times
 \nonumber \\
 &
  \label{offloading_gain}
 \quad\quad\quad\quad \quad\quad \underbrace{\int_{r=0}^{\infty}f_R(r) \Pb(\sir_{|r}>\vartheta) \dd{r}}_{\Upsilon},
 \end{align}
where $\sir_{|r}$ is the received \ac{SIR} at the typical device when downloading a content from a catering device $r$ apart from the origin, and $\Upsilon$ represents the rate coverage probability. The first term in (\ref{offloading_gain}) is the probability of requesting a locally cached file (self-cache). The second term is the probability that a requested file $i$ is cached in at least one cluster member and being downloadable with an \ac{SIR} greater than $\vartheta$, given that it was not self-cached. More precisely, since the number of devices per cluster has a Poisson distribution, the probability that there are $k$ devices per cluster is equal to $\frac{\overline{n}^k e^{-\overline{n}}}{k!}$. Accordingly, the probability that there are $k$ devices caching content $i$ is $\frac{(b_i\overline{n})^k e^{-b_i\overline{n}}}{k!}$. Hence, the probability that at least one device caches content $i$ is $1 - e^{-b_i\overline{n}}$.

For the serving distance distribution $f_R(r)$, since both the typical device and catering device have their locations drawn from a normal distribution with variance $\sigma^2$, then by definition, the serving distance has a Rayleigh distribution of scale parameter $\sqrt{2}\sigma$, i.e., $f_R(r)=  \frac{r}{2 \sigma^2} {\rm e}^{\frac{-r^2}{4 \sigma^2}}$. It is worth noting that the serving distance is independent of the caching probability $b_i$. To clarify, from the thinning theorem \cite{haenggi2012stochastic}, the set of devices caching content $i$ in a given cluster forms a Gaussian \ac{PPP} $\Phi_{ci}$ whose intensity is $\lambda_{ci} = b_i\lambda_{c}(y)$. The \ac{PDF} of the distance between a randomly selected caching device from $\Phi_{ci}$ and the typical device is $f_R(r)$, which is again independent of $b_i$.

The received power at the typical device from a catering device located at $y_0$ relative to the cluster center is given by 
\begin{align}
P &= P_d  g_0 \lVert x_0+y_0\rVert^{-\alpha}= P_d  g_0 r^{-\alpha}			
\label{pwr}
\end{align}
where $P_d$ denotes the \ac{D2D} transmission power, $g_0\sim$ exp(1) is the complex Gaussian fading channel coefficient, and $\alpha > 2$ is the path loss exponent. Under this setup, the typical device sees two types of interference, namely, the intra- and inter-cluster interference. We first describe the inter-cluster interference, then the intra-cluster interference is characterized. The set of active devices in any remote cluster is denoted as $\mathcal{B}^q$, where $q$ refers to the access probability. Similarly, the set of active devices in the local cluster is denoted as $\mathcal{A}^q$. The received interference at the typical device from simultaneously active \ac{D2D} transmitters within the remote clusters is
\begin{align}
I_{\Phi_p^{!}}= \sum_{x \in \Phi_p^{!}} \sum_{y\in \mathcal{B}^q}   P_d g_{y_x}  \lVert x+y\rVert^{-\alpha}
=  \sum_{x\in \Phi_p^{!}} \sum_{y\in \mathcal{B}^q}  P_d g_{u}  u^{-\alpha}
\nonumber 
\end{align}
where $\Phi_p^{!}=\Phi_p \setminus x_0$ for ease of notation, $y$ is the marginal distance between a potential interfering device and its cluster center at $x \in \Phi_p$, $u = \lVert x+y\rVert$ is a realization of a \ac{RV} $U$ that models the inter-cluster interfering distance,  
$g_{y_x} \sim $ exp(1), and $g_{u} = g_{y_x}$. The intra-cluster interference is then given by
\begin{align}
I_{\Phi_c} &=  \sum_{y\in \mathcal{A}^p}   P_d g_{y_{x_0}}  \lVert x_0+y\rVert^{-\alpha}
=   \sum_{y\in \mathcal{A}^p}  P_d g_{h}  h^{-\alpha}
\nonumber 
\end{align}
where $y$ is the marginal distance between the intra-cluster interfering devices  and the cluster center at $x_0 \in \Phi_p$, $h = \lVert x_0+y\rVert$ is a realization of a \ac{RV} $H$, which models the intra-cluster interfering distance,  
$g_{y_{x_0}} \sim $ exp(1), and $g_{h} = g_{y_{x_0}}$. From the thinning theorem \cite{haenggi2012stochastic}, the set of active transmitters based on the slotted-ALOHA medium access forms a Gaussian \ac{PPP} $\Phi_{cq}$ whose intensity is given by
 \begin{align}
\lambda_{cq} = q\lambda_{c}(y) = q\overline{n}f_Y(y) =\frac{q\overline{n}}{2\pi\sigma^2}\textrm{exp}\Big(-\frac{\lVert y\rVert^2}{2\sigma^2}\Big),	\quad		y \in \mathbb{R}^2
\nonumber 
 \end{align}
 Assuming that the thermal noise is neglected as compared to the aggregate interference, the received 
 \ac{SIR} at the typical device can be written as 
\begin{equation}
\sir_{|r}=\textbf{1}\{N_r=1\} \frac{P}{I_{\Phi_p^{!}} + I_{\Phi_c}} = \textbf{1}\{N_r=1\} \frac{P_d  g_0 r^{-\alpha}}{I_{\Phi_p^{!}} + I_{\Phi_c}}
\end{equation}
\black{where $\textbf{1}\{.\}$ is the indicator function, and for ease of exposition, $N_r=N_{y_0}$ is a Bernoulli \ac{RV} that takes the value one with probability $q$. Thus, the event $\{N_r=1\}$ captures the incident when the serving device is admitted to access the channel.}  
Then, the probability that the received \ac{SIR} is higher than the required threshold $\vartheta$ is derived as follows:
\begin{align}
\Upsilon_{|r} = \Pb(\sir_{|r}>\vartheta)&= \Pb \Big(\textbf{1}\{ N_r=1\}\frac{P_d  g_0 r^{-\alpha}}{I_{\Phi_p^{!}} + I_{\Phi_c}} > \vartheta \Big)	
\nonumber  \\
&\overset{(a)}{=}  q   \Pb\Big(\frac{P_d  g_0 r^{-\alpha}}{I_{\Phi_p^{!}} + I_{\Phi_c}} > \vartheta\Big)
\end{align}
where (a) follows from the assumption of a Bernoulli's \ac{RV} with mean $q$. Rearranging the right-hand side, we get
\begin{align}
\Upsilon_{|r} &\overset{(b)}{=} q \mathbb{E}_{I_{\Phi_p^{!}},I_{\Phi_c}}\Big[\text{exp}\big(\frac{-\vartheta r^{\alpha}}{P_d}{[I_{\Phi_p^{!}} + I_{\Phi_c}] }\big)\Big]
\nonumber  \\
\label{prob-R1-g-R0}
 &\overset{(c)}{=}  q \mathscr{L}_{I_{\Phi_p^{!}}}(s) \mathscr{L}_{I_{\Phi_c}} (s)
\end{align}			
where (b) follows from the assumption $g_0 \sim \mathcal{CN}(0,1)$, and (c) follows from the independence of the intra- and inter-cluster interference and calculating the Laplace transform of them, with $s=\frac{\vartheta r^{\alpha}}{P_d}$. The classical tradeoff between frequency reuse and higher interference power is represented in (\ref{prob-R1-g-R0}). In other words, increasing the access probability $q$ allows more opportunities to access the channel, but this channel access would then be accompanied with higher interference power.

Next, we first derive the Laplace transform of interference to obtain the rate coverage probability $\Upsilon$. Then, we formulate the offloading gain maximization problem.
\begin{lemma}
\label{lemma-inter}
Laplace transform of the inter-cluster aggregate interference $I_{\Phi_p^{!}}$ is given by 
\begin{align}
 \label{LT_inter}
\mathscr{L}_{I_{\Phi_p^{!}}}(s) &= {\rm exp}\Big(-2\pi\lambda_p \int_{v=0}^{\infty}\Big(1 -  {\rm e}^{-q\overline{n} \varphi(s,v)}\Big)v\dd{v}\Big),
\end{align}
where $s=\frac{\vartheta r^{\alpha}}{P_d}$, $\varphi(s,v) = \int_{u=0}^{\infty}\frac{s}{s+ u^{\alpha}}f_U(u|v)\dd{u}$, and $f_U(u|v) = \mathrm{Rice} (u| v, \sigma)$ represents Rice's \ac{PDF} of parameter $\sigma$, and $v=\lVert x\rVert$.
\end{lemma}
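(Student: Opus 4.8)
The plan is to exploit the hierarchical structure of the \ac{TCP} and evaluate $\mathscr{L}_{I_{\Phi_p^{!}}}(s)=\Eb[e^{-sI_{\Phi_p^{!}}}]$ in two nested stages: an inner average over the active interferers inside each remote cluster, followed by an outer average over the parent \ac{PPP}. First I would condition on the parent process $\Phi_p$. Since the daughter processes attached to distinct parents are independent, the conditional Laplace transform factorizes into a product over the interfering centers $x\in\Phi_p^{!}$, where each factor is the Laplace transform of the aggregate interference generated by a single cluster located at $x$.

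For the inner stage, fix a center $x$ and write $v=\lVert x\rVert$. By the thinning theorem already invoked in the text, the channel-accessing members of that cluster form an inhomogeneous Poisson process of intensity $q\lambda_c(y)=q\overline{n}f_Y(y)$. I would then apply the Laplace functional of a \ac{PPP}, turning the cluster's interference sum into $\exp\!\big(-\int_{\Rs}\big(1-\Eb_{g}[e^{-sP_d g\lVert x+y\rVert^{-\alpha}}]\big)q\overline{n}f_Y(y)\dd{y}\big)$. The fading average over $g\sim\exp(1)$ is elementary, $\Eb_{g}[e^{-sP_d g\lVert x+y\rVert^{-\alpha}}]=(1+sP_d\lVert x+y\rVert^{-\alpha})^{-1}$, so that (with the common normalization $P_d=1$) the integrand becomes $s/(s+\lVert x+y\rVert^{\alpha})$. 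The key step is the change of variables: because $y$ is normally scattered about the origin, the shifted point $x+y$ is a planar Gaussian centered at $x$, hence its modulus $u=\lVert x+y\rVert$ follows a Rice distribution with parameters $v$ and $\sigma$. Rewriting $\int_{\Rs}(\cdot)q\overline{n}f_Y(y)\dd{y}$ as the radial expectation $q\overline{n}\int_{0}^{\infty}\frac{s}{s+u^{\alpha}}f_U(u|v)\dd{u}=q\overline{n}\varphi(s,v)$ collapses the planar integral to the one-dimensional form in the statement, yielding a per-cluster factor $e^{-q\overline{n}\varphi(s,v)}$.

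For the outer stage, I would insert this factor into the product over $\Phi_p^{!}$ and apply the \ac{PGFL} of the parent \ac{PPP} of density $\lambda_p$, which gives $\exp\!\big(-\lambda_p\int_{\Rs}\big(1-e^{-q\overline{n}\varphi(s,\lVert x\rVert)}\big)\dd{x}\big)$. Converting to polar coordinates, integrating out the angle to produce the factor $2\pi$, and writing $v=\lVert x\rVert$ then reduces the planar integral to $\int_{0}^{\infty}\big(1-e^{-q\overline{n}\varphi(s,v)}\big)v\dd{v}$, which is exactly (\ref{LT_inter}). By Slivnyak's theorem, conditioning on the representative cluster center at $x_0$ leaves the remaining parents distributed as the original \ac{PPP}, so removing $x_0$ from $\Phi_p$ does not alter the integral.

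I expect the principal obstacle to be the clean handling of the two-level Poisson structure, specifically justifying that the inner cluster average and the outer parent-\ac{PPP} average compose into a single nested \ac{PGFL}, together with the change of variables from the planar Gaussian displacement to the radial Rice density $f_U(u|v)$, which is what produces the inner integral $\varphi(s,v)$. The remaining manipulations, namely the exponential fading average and the polar-coordinate reduction, are routine.
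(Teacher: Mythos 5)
Your proposal is correct and follows essentially the same route as the paper's proof: a nested application of the Poisson PGFL (inner over the thinned daughter process of intensity $q\overline{n}f_Y(y)$, outer over the parent PPP), the exponential-fading average yielding the factor $s/(s+u^{\alpha})$, and the observation that $\lVert x+y\rVert$ is Rice-distributed with parameters $v=\lVert x\rVert$ and $\sigma$, followed by the polar-coordinate reduction. The only additions beyond the paper's argument are cosmetic (the explicit appeal to Slivnyak's theorem and the remark on normalizing $P_d$), so no further comment is needed.
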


\begin{proof}
Please see the Appendix.
\end{proof}

\begin{lemma}
\label{intra-int}
Laplace transform of the intra-cluster aggregate interference $I_{\Phi_c}$ is approximated as 
\begin{align}   
\label{LT_intra}
         \mathscr{L}_{I_{\Phi_c} }(s) \approx  {\rm exp}\Big(-q\overline{n} \int_{h=0}^{\infty}\frac{s}{s+ h^{\alpha}}f_H(h)\dd{h}\Big), 
\end{align} 
where $f_H(h) =\mathrm{Rayleigh}(h,\sqrt{2}\sigma)$ represents Rayleigh's \ac{PDF} with scale parameter $\sqrt{2}\sigma$. 
\end{lemma}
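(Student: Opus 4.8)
The plan is to evaluate $\mathscr{L}_{I_{\Phi_c}}(s)=\Eb\!\left[{\rm e}^{-s I_{\Phi_c}}\right]$ directly from the point-process structure of the active intra-cluster transmitters. First I would insert $I_{\Phi_c}=\sum_{y\in\mathcal{A}^q}P_d g_h h^{-\alpha}$ and, since the fading marks $g_h\sim{\rm exp}(1)$ are i.i.d. and independent of the locations, average them out link by link to obtain the per-interferer factor $\Eb_{g}[{\rm e}^{-s P_d g h^{-\alpha}}]=(1+s P_d h^{-\alpha})^{-1}$. Recalling that the ALOHA-thinned active set $\mathcal{A}^q$ forms a Gaussian PPP $\Phi_{cq}$ of intensity $q\overline{n}f_Y(y)$, I would then apply its PGFL to get, \emph{conditioned on the cluster-centre displacement} $x_0$, the expression $\exp\big(-q\overline{n}\,g(x_0)\big)$ with conditional exponent $g(x_0)=\int_{\Rs}\big(1-(1+s P_d\lVert x_0+y\rVert^{-\alpha})^{-1}\big)f_Y(y)\dd{y}$; equivalently, the number of active interferers is Poisson with mean $q\overline{n}$ placed at conditionally i.i.d. distances.

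The crux of the argument — and the reason the statement is only an approximation — is collapsing this two-dimensional displacement integral to the one-dimensional distance integral of the lemma. The obstacle is correlation: the typical device and every intra-cluster interferer are scattered about the \emph{same} centre $x_0$, so their distances $h=\lVert x_0+y\rVert$ to the origin are merely \emph{conditionally} i.i.d. given $x_0$ (each conditionally Rice-distributed, exactly as in Lemma~\ref{lemma-inter}), and the exact transform keeps the outer expectation $\Eb_{x_0}\big[\exp(-q\overline{n}\,g(x_0))\big]$, which has no closed form. I would dispose of this by pushing the centre expectation inside the exponent, $\Eb_{x_0}\big[\exp(-q\overline{n}\,g(x_0))\big]\approx\exp\big(-q\overline{n}\,\Eb_{x_0}[g(x_0)]\big)$, which is the same as treating the interferer distances as independent draws from their \emph{marginal} law. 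Because the typical device and an interferer are each Gaussian with per-coordinate variance $\sigma^2$ about $x_0$, their separation is a zero-mean Gaussian vector with per-coordinate variance $2\sigma^2$, so $H$ is Rayleigh with scale $\sqrt{2}\sigma$, i.e. $f_H(h)=\mathrm{Rayleigh}(h,\sqrt{2}\sigma)$ — the very law already found for the serving distance $R$.

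Finally I would simplify the averaged kernel, $1-(1+s P_d h^{-\alpha})^{-1}=s/(s+h^\alpha)$ under the same power normalisation adopted in Lemma~\ref{lemma-inter}, and carry out the distance average against $f_H$; this collapses $\Eb_{x_0}[g(x_0)]$ to $\int_{0}^{\infty}\frac{s}{s+h^\alpha}f_H(h)\dd{h}$ and delivers the claimed $\mathscr{L}_{I_{\Phi_c}}(s)\approx\exp\big(-q\overline{n}\int_{0}^{\infty}\frac{s}{s+h^\alpha}f_H(h)\dd{h}\big)$. The only genuinely hard step is the centre-induced correlation; the fading average, the PGFL, and the Gaussian-difference identification of $f_H$ are all routine. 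As a sanity check I would note that, since ${\rm e}^{-t}$ is convex, the approximation is a Jensen lower bound on the exact transform and tightens whenever the per-cluster fluctuations of $g(x_0)$ are small.
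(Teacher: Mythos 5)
Your proof is correct and takes essentially the same route the paper sketches: average out the exponential fading, apply the PGFL of the ALOHA-thinned Gaussian PPP of intensity $q\overline{n}f_Y(y)$, and obtain the approximation precisely by neglecting the correlation induced by the common cluster centre $x_0$, so that each interferer distance is replaced by its marginal Rayleigh law with scale $\sqrt{2}\sigma$. Your closing observation that this is a Jensen lower bound on the exact transform is a correct refinement that the paper does not state.
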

The proof of Lemma \ref{intra-int} proceeds in a similar way to the proof of Lemma \ref{lemma-inter}, and the approximation follows from neglecting the correlation among intra-cluster serving distances, i.e., the common part $x_0$ in $\lVert x_0 + y\rVert$. The proof is omitted due to limited space.

To validate the approximation in Lemma 2, in Fig.~\ref{cov_prob_vs_sigma_new}, we plot the rate coverage probability $\Upsilon$, computed from (\ref{offloading_gain}), against the displacement standard deviation $\sigma$. Fig.~\ref{cov_prob_vs_sigma_new} verifies that the adopted approximation is accurate. It is intuitive to see that the $\Upsilon$ decreases as both $\sigma$ and $\lambda_p$ increase. This is attributed to the fact that the desired signal level decreases as $\sigma$ decreases, meanwhile, the interference power increases with $\lambda_p$ and $\sigma$. 
\begin{figure}[!tb]
\vspace{-0.7 cm}
	\begin{center}
		\includegraphics[width=2.4 in]{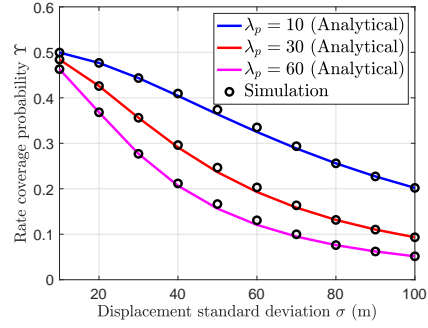}	
		\caption {The rate coverage probability $\Upsilon$ versus the displacement standard deviation $\sigma$ ($\overline{n}=5$, $\vartheta=\SI{0}{dB}$, $p=0.3$).}
		\label{cov_prob_vs_sigma_new}
	\end{center}
\vspace{-0.4 cm}
\end{figure}
From (\ref{offloading_gain}), (\ref{LT_inter}), and (\ref{LT_intra}), we get
 \begin{align}
 \Pb_o(q,\boldsymbol{b}) &= \sum_{i=1}^{N_f} p_i b_i + p_i(1 - b_i)(1 - e^{-b_i\overline{n}}) \times
 \nonumber \\
 &
 \label{offloading_gain_1}
 \quad\quad
\int_{r=0}^{\infty} 
 \frac{r}{2 \sigma^2} {\rm e}^{\frac{-r^2}{4 \sigma^2}}  p
 \mathscr{L}_{I_{\Phi_p^{!}}}(s) \mathscr{L}_{I_{\Phi_c}} (s)\dd{r},				
 \end{align}
 
 Having characterized the offloading gain, next, we formulate the joint channel access and caching optimization  problem.  
 \section{Maximizing Offloading Gain}
The offloading gain maximization problem is formulated as
\begin{align}
\label{optimize_eqn_p}
\textbf{P1:}		\quad &\underset{q,\boldsymbol{b}}{\text{max}} \quad \Pb_o(q,\boldsymbol{b}) \\
\label{const110}
&\textrm{s.t.}\quad  \sum_{i=1}^{N_f} b_i = M,   \\
\label{const111}
&  b_i \in [ 0, 1],  \\
\label{const112}
&  q \in [ 0, 1], 
\end{align}	
where (\ref{const110}) is the device cache size constraint. Since the offloading gain depends on the caching function $\boldsymbol{b}$ and the access probability $q$, and since $q$ exists as a complicated exponential term in $\Upsilon$ (see (\ref{prob-R1-g-R0}), (\ref{LT_inter}), and (\ref{LT_intra})), it is difficult to analytically characterize the objective function, e.g., show concavity or find a tractable
expression for the optimal access probability. In order to tackle this, we propose to find the optimal  access probability $q^*$ that maximizes $\Upsilon$ via the bisection search method in its feasible range $q \in [ 0, 1]$. Then, the obtained $q^*$ is used to solve for the caching probability $\boldsymbol{b}$ in the optimization problem below. 
%
%
\begin{align}
\label{optimize_eqn_b_i}
\textbf{P2:}		\quad &\underset{\boldsymbol{b}}{\text{max}} \quad \Pb_o(q^*,\boldsymbol{b}) \\
\label{const11}
&\textrm{s.t.}\quad  (\ref{const110}), (\ref{const111})   \nonumber 
\end{align}
\begin{lemma}
For fixed $q^*$, $\Pb_o(q^*,\boldsymbol{b})$ is a concave function w.r.t. $\boldsymbol{b}$ and the optimal caching probability $\boldsymbol{b}^{*}$ that maximizes the offloading gain is given by
      \[
    b_{i}^{*}=\left\{
                \begin{array}{ll}
                  1 \quad\quad\quad , v^{*} <   p_i -p_i(1-e^{-\overline{n}})\Upsilon\\ 
                  0   \quad\quad\quad, v^{*} >   p_i + \overline{n}p_i\Upsilon\\
                 \psi(v^{*}) \quad, {\rm otherwise}
                \end{array}   
              \right.
  \]
where $\psi(v^{*})$ is the solution of $v^{*} =   p_i  +  p_i\big(\overline{n}(1-b_i^*)e^{-\overline{n}b_i^*} - (1 - e^{-\overline{n}b_i^*})\big)\Upsilon$, that satisfies $\sum_{i=1}^{N_f} b_i^*=M$.
\end{lemma}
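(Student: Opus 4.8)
The plan is to exploit the key observation that, for a fixed access probability $q^*$, the rate coverage probability $\Upsilon$ is determined entirely by the interference statistics through the Laplace transforms in (\ref{LT_inter}) and (\ref{LT_intra}), which depend on $q^*$ but \emph{not} on the caching vector $\boldsymbol{b}$. Hence $\Upsilon$ is a constant in \textbf{P2}, and the objective in (\ref{offloading_gain_1}) becomes separable across files: $\Pb_o(q^*,\boldsymbol{b}) = \sum_{i=1}^{N_f} f_i(b_i)$ with $f_i(b_i) = p_i b_i + p_i\Upsilon(1-b_i)(1-e^{-\overline{n} b_i})$. This separability is what makes both the concavity argument and the per-file optimality conditions tractable.

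First I would establish concavity. Since the objective is a sum of single-variable functions, its Hessian is diagonal, so it suffices to show $f_i''(b_i)\le 0$ on $[0,1]$. The linear term contributes nothing, and a direct second differentiation gives $f_i''(b_i) = -p_i\Upsilon\,\overline{n}\,e^{-\overline{n} b_i}\bigl[2 + \overline{n}(1-b_i)\bigr]$, which is non-positive for all $b_i\in[0,1]$ because $p_i,\Upsilon,\overline{n}\ge 0$ and $1-b_i\ge 0$. Thus each $f_i$ is concave, $\Pb_o(q^*,\cdot)$ is concave, and together with the linear equality (\ref{const110}) and the box constraints (\ref{const111}), \textbf{P2} is a convex program whose KKT conditions are necessary and sufficient.

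Next I would solve via KKT. Introducing a multiplier $v$ for $\sum_i b_i = M$ and multipliers $\mu_i,\nu_i\ge 0$ for the box constraints, stationarity reads $f_i'(b_i) - v + \mu_i - \nu_i = 0$ with complementary slackness. Computing $f_i'(b_i) = p_i + p_i\Upsilon\bigl[\overline{n}(1-b_i)e^{-\overline{n} b_i} - (1-e^{-\overline{n} b_i})\bigr]$, the interior case $0<b_i<1$ (so $\mu_i=\nu_i=0$) yields exactly the fixed-point equation $v = f_i'(b_i)$ that defines $\psi(v)$. Evaluating $f_i'$ at the endpoints handles the boundaries: since $f_i'(1) = p_i - p_i\Upsilon(1-e^{-\overline{n}})$ and $f_i'(0) = p_i + \overline{n} p_i\Upsilon$, complementary slackness forces $b_i^*=1$ whenever $v < f_i'(1)$ and $b_i^*=0$ whenever $v > f_i'(0)$, reproducing the three-way structure of the statement. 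Strict concavity makes $f_i'$ strictly decreasing, guaranteeing that $\psi(v)$ is well defined and unique on the intermediate range.

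Finally I would pin down $v^*$ by enforcing primal feasibility $\sum_{i=1}^{N_f} b_i^*(v) = M$. Because each $b_i^*(v)$ is non-increasing in $v$, the total cached mass is monotone in $v$, so a unique $v^*$ meeting the budget exists and is found by a one-dimensional search. The only real obstacle is that $\psi(v^*)$ is implicit — the transcendental relation $v = f_i'(b_i)$ admits no closed-form inverse — but this is a presentational rather than analytical difficulty: monotonicity of $f_i'$ secures existence and uniqueness of $\psi(v^*)$ per file, and the monotone dependence of the budget on $v$ secures a unique $v^*$, so the optimizer is fully characterized even without an explicit expression for $\psi$.
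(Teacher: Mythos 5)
Your proof is correct and follows essentially the same route as the paper's: establish concavity via the diagonal Hessian with $\frac{\partial^2 \Pb_o}{\partial b_i^2}\le 0$, then apply the KKT conditions with a multiplier for the cache-size constraint; you simply fill in the derivative computations and endpoint evaluations that the paper omits for space. Your expressions $f_i'(0)=p_i+\overline{n}p_i\Upsilon$ and $f_i'(1)=p_i-p_i\Upsilon(1-e^{-\overline{n}})$ match the thresholds in the lemma exactly.
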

\begin{proof}
It can be easily verified that $\frac{\partial^2 \Pb_o}{\partial b_i^2}$ is always negative, and 
$\frac{\partial^2 \Pb_o}{\partial b_i \partial b_j}=0$ for all $i\neq j$. Hence, the Hessian matrix $\boldsymbol{H}_{i,j}$ of $\Pb_o(q^*,\boldsymbol{b})$ w.r.t. $\boldsymbol{b}$ is negative semidefinite, and $\Pb_o(q^*,\boldsymbol{b})$ is a concave function of $\boldsymbol{b}$. Also, the constraints are linear, which implies that the necessary and sufficient conditions for optimality exist. The dual Lagrangian function and the \ac{KKT} conditions can be employed to solve \textbf{P2}, with the details omitted due to the limited space. 
\end{proof}

Clearly, the optimal caching solution $\boldsymbol{b}^*$ depends on the scheduling of devices through channel access probability $q^*$ from $\Upsilon$, while $q^*$ is independent of $\boldsymbol{b}^*$. \cite{cache_schedule} shows that a \ac{PPP} network exhibits the same property, i.e., the caching scheme is scheduling-dependent. To gain some insights, it is useful to consider a simple case when only one D2D link per cluster is allowed. In this case, the rate coverage probability of the proposed clustered model with one active \ac{D2D} link within a cluster will be \cite[Lemma 2]{8647532}:   
\begin{align}
\Upsilon = \frac{1}{\big(4\sigma^2\pi\lambda_p \vartheta^{2/\alpha}\Gamma(1 + 2/\alpha)\Gamma(1 - 2/\alpha)+ 1\big)}. 
\end{align}
%
Substituting in (\ref{offloading_gain_1}) for $\Upsilon$, we get the offloading gain as
\begin{align}
 \label{offloading_gain_geometry}
 \Pb_o(\boldsymbol{b}) &= \sum_{i=1}^{N_f} p_i b_i + \frac{p_i(1 - b_i)(1 - e^{-b_i\overline{n}})}{4\sigma^2\pi\lambda_p \vartheta^{2/\alpha}\Gamma(1 + 2/\alpha)\Gamma(1 - 2/\alpha)+ 1} ,				
 \end{align}
\begin{remark} 
\label{remark1}
{\rm From (\ref{offloading_gain_geometry}), it is clear that the offloading gain increases as $\sigma$ and $\lambda_p$ decrease. Particularly, the offloading gain is inversely proportional to the density of clusters $\lambda_p$ and the variance of the displacement $\sigma^2$. This is because smaller $\sigma$ results in higher levels of the desired signal, while lower $\lambda_p$ leads to smaller encountered interference at the typical device.} 
\end{remark}
\vspace{-0.5 cm}
\section{Numerical Results}
\begin{table}[!tb]
\vspace{-0.5 cm}
\caption{Simulation Parameters} 
\centering 
\begin{tabular}{l l l} 
\hline\hline 
Description & Parameter & Value  \\ [0.5ex] 
\hline 
Displacement standard deviation & $\sigma$ & \SI{10}{\metre} \\ 
Popularity index&$\beta$&0.5\\
Path loss exponent&$\alpha$&4\\
Library size and cache size per device&$N_f$, $M$&100, 8 files\\
\black{Average number of devices per cluster}&$\overline{n}$&4\\
Density of clusters&$\lambda_{p}$&10 clusters/\SI{}{km}$^2$\\
$\sir$ threshold&$\vartheta$&\SI{0}{\deci\bel}\\
\hline 
\end{tabular}
\label{ch3:table:sim-parameter} 
\end{table}
We first validate the developed mathematical model via Monte Carlo simulations. Then we benchmark the proposed caching scheme against conventional caching schemes. Unless otherwise stated, the network parameters are selected as shown in Table \ref{ch3:table:sim-parameter}. 
\begin{figure}[!tbh]			
\vspace{-0.4 cm}
	\begin{center}
		\includegraphics[width=2.2 in]{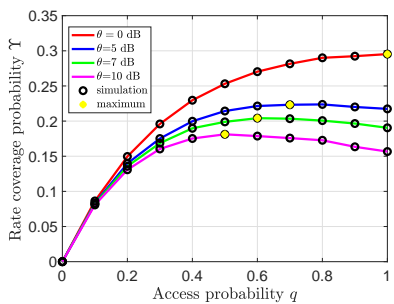}	
		\caption{The rate coverage probability $\Upsilon$ versus the access probability $q$.}
		\label{prob_r_geq_r0_vs_p}
	\end{center}
\vspace{-0.3 cm}
\end{figure}		

In Fig.~\ref{prob_r_geq_r0_vs_p}, we plot the rate coverage probability $\Upsilon$ against the channel access probability $q$. The theoretical and simulated results are plotted together, and they are consistent. Clearly, there is an optimal $q^*$; before it $\Upsilon$ tends to increase as the probability of accessing the channel increases, and beyond it, $\Upsilon$ tends to decrease due to the effect of aggressive interference. It is intuitive to observe that the optimal access probability $q^*$, which maximizes $\Upsilon$, decreases as $\vartheta$ increases. This reflects the fact the system becomes more sensitive to the effect of interference when a higher \ac{SIR} threshold is required.


\begin{figure}[!t]			
\vspace{-1.1 cm}
	\begin{center}
		\includegraphics[width=2.2 in]{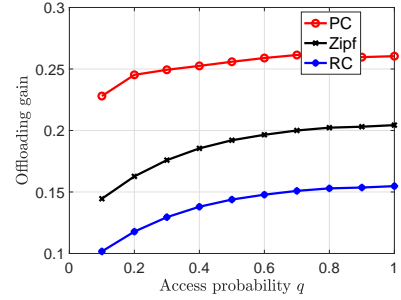}			
		\caption {The offloading gain versus the access probability $q$.}			
		\label{offloading_gain_vs_scaling}
	\end{center}
\vspace{-0.0 cm}
\end{figure}
Fig.~\ref{offloading_gain_vs_scaling} manifests the effect of the access probability $q$ on the offloading gain. The offloading gain is plotted against $q$ for different caching schemes, namely, the proposed \ac{PC}, Zipf caching (Zipf), and uniform \ac{RC}. Fig.~\ref{offloading_gain_vs_scaling} is plotted for an \ac{SIR} threshold $\vartheta=\SI{0}{\deci\bel}$, hence, the optimal access probability $q^*$ is near one from Fig.~{\ref{prob_r_geq_r0_vs_p}}. Clearly, the offloading gain for the different caching schemes improves as $q$ approaches its optimal value, which reveals the crucial impact of the device scheduling on the content placement and accordingly, on the offloading gain. Moreover, the proposed \ac{PC} is shown to attain the best performance as compared to other benchmark schemes. 
 
\begin{figure}[!t]
\vspace{-0.2 cm}
\centering
  \subfigure[$q=q^*$. ]{\includegraphics[width=1.65 in]{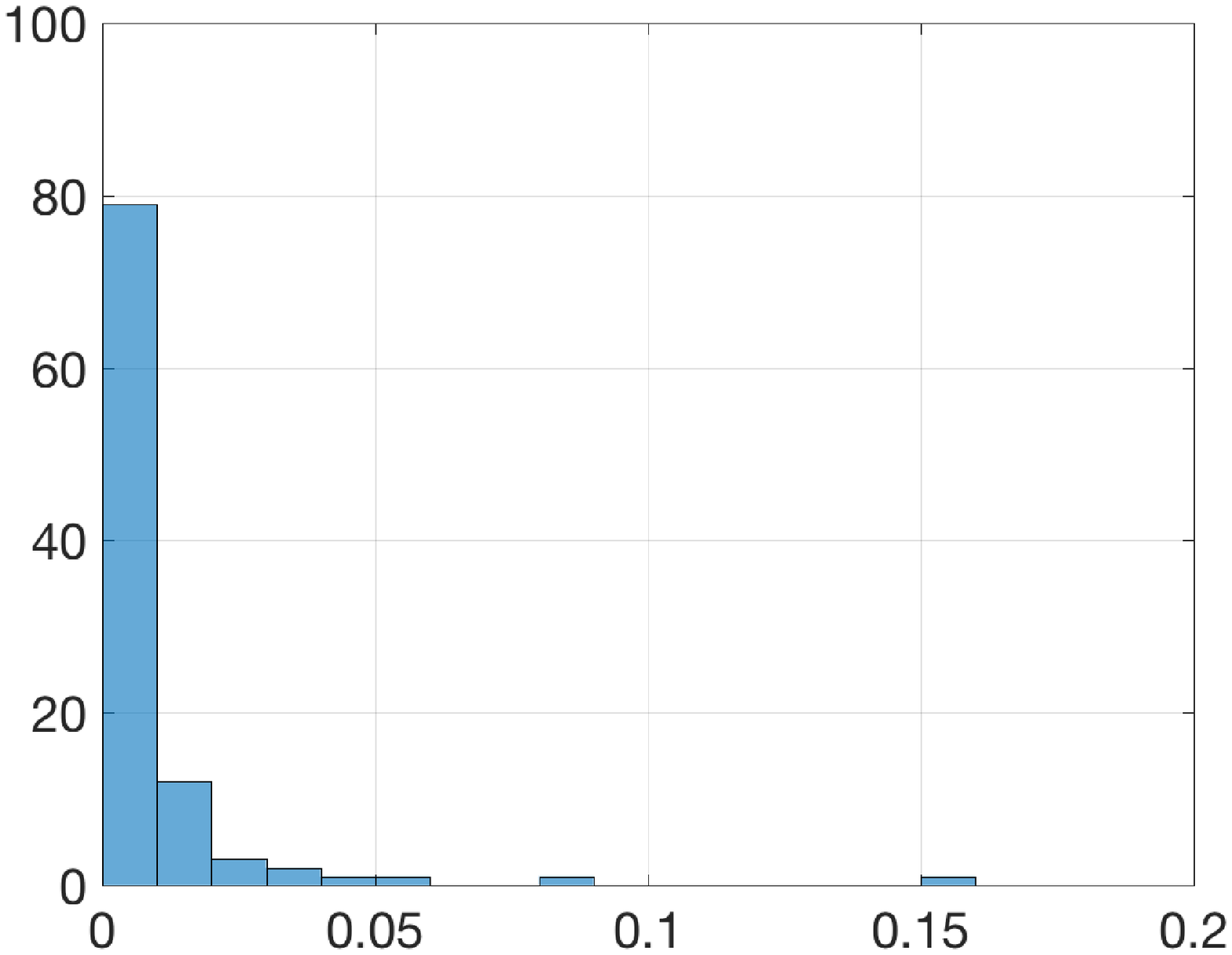}		
\label{histogram_b_i_p_star}}
\subfigure[$q < q^*$.]{\includegraphics[width=1.65 in]{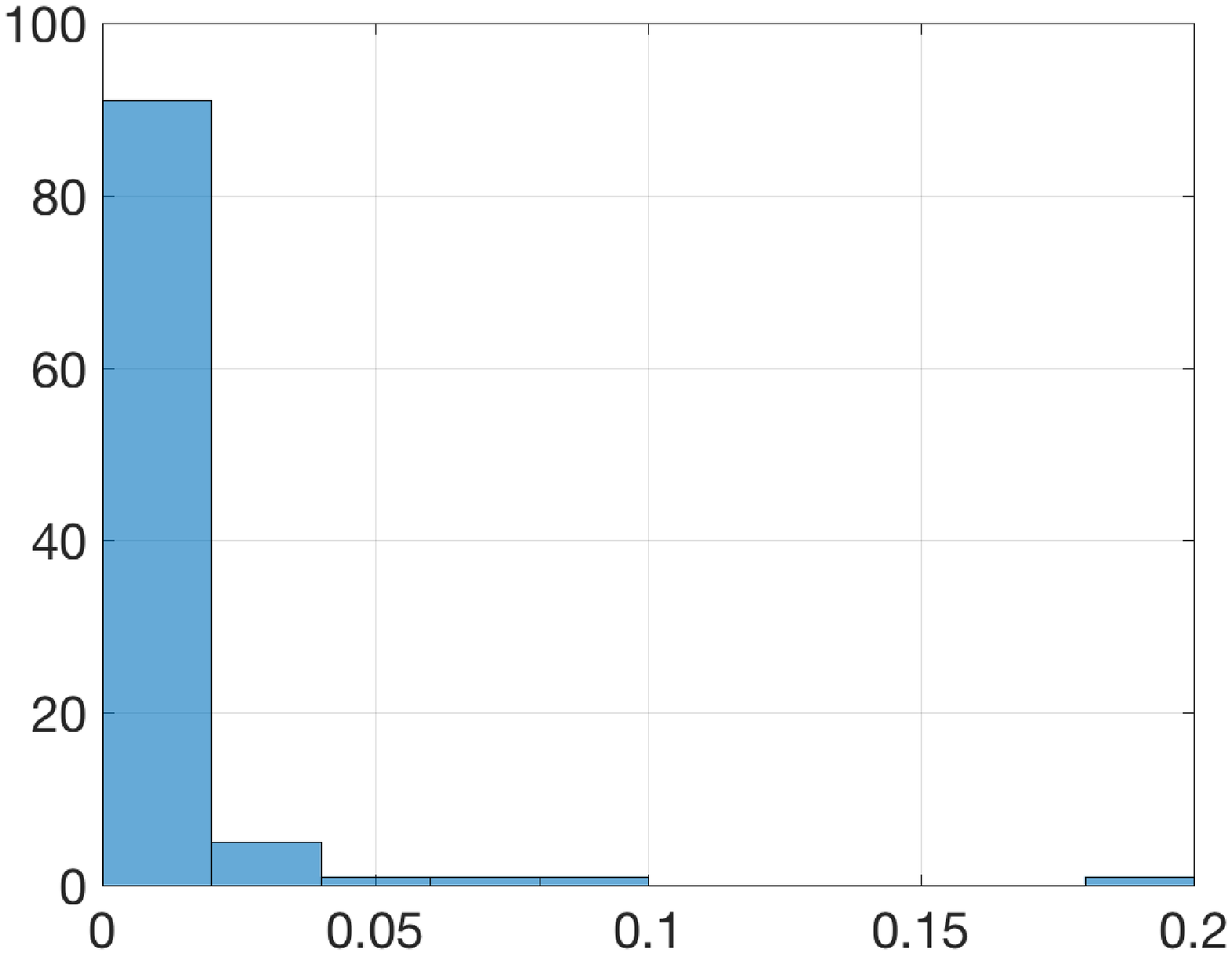}		
  \label{histogram_b_i_p_leq_p_star}}
\caption{Histogram of the optimal caching probability $\boldsymbol{b}^*$.}
\label{histogram_b_i}
\vspace{-0.3 cm}
\end{figure}
\black{To show the effect of $q$ on the caching probability, in Fig.~\ref{histogram_b_i}, we plot the histogram of the optimal caching probability at different $q$ values.  Specifically, $q=q^*$ in Fig.~\ref{histogram_b_i_p_star} and $q<q^*$ in Fig.~\ref{histogram_b_i_p_leq_p_star}. It is clear from the histograms that the optimal caching probability $\boldsymbol{b}^*$ tends to be more skewed when $q < q^*$, i.e., when $\Upsilon$ decreases. This shows that file sharing is more difficult when $q$ is not optimized.
Broadly speaking, for $q<q^*$, the system is too conservative, while for $q>q^*$, the outage probability is high due to the aggressive interference. In such regimes, each device tends to cache the most popular files leading to fewer opportunities of content transfer.}

\begin{figure}[!t]
\vspace{-1.2 cm}
	\begin{center}
		\includegraphics[width=2.2 in]{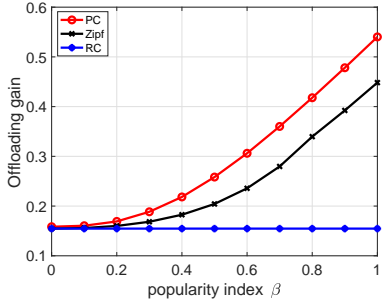}			
		\caption {The offloading gain versus the popularity of files $\beta$.}	
		\label{offloading_gain_vs_beta}
	\end{center}
\vspace{-0.2 cm}
\end{figure}
Fig.~\ref{offloading_gain_vs_beta} illustrates the prominent effect of the content popularity on the offloading gain, and compares the achievable gain of the above mentioned caching schemes. Clearly, the offloading gain of the proposed \ac{PC} attains the best performance as compared to other schemes. Particularly, 10$\%$ improvement in the offloading gain is observed compared to the Zipf caching when $\beta=1$. Moreover, we note that all caching schemes encompass the same offloading gain when $\beta=0$ owing to the uniformity of content popularity.

To show the effect of network geometry, in Fig.~\ref{offloading_gain_geometry-fig}, we plot the closed-form  offloading gain in (\ref{offloading_gain_geometry}) against $\sigma$ at different $\lambda_p$. Fig.~\ref{offloading_gain_geometry-fig} shows that the offloading gain monotonically decreases with both $\sigma$ and $\lambda_p$. This is because content sharing between devices turns out to be less successful when the distance between devices is large, i.e., larger $\sigma$. \black{This result is also aligned with the outcome of Fig.~\ref{cov_prob_vs_sigma_new} which showed that the rate coverage probability $\Upsilon$ decreases as $\sigma$ or $\lambda_p$ increase}. Analogously, file sharing among the cluster devices is accompanied with higher interference when $\lambda_p$ and $\sigma$ are higher. Accordingly, this expected degradation results in less successful content delivery via \ac{D2D} communication. 
\begin{figure}[!t]
\vspace{-0.50 cm}
	\begin{center}
		\includegraphics[width=2.2 in]{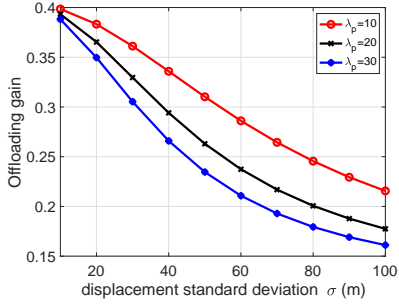}			
		\caption {The offloading gain versus the displacement standard deviation $\sigma$ at different cluster densities $\lambda_p$.}
		\label{offloading_gain_geometry-fig}
	\end{center}
\vspace{-0.7 cm}
\end{figure}

\vspace{-0.2 cm}
\section{Conclusion}
\vspace{-0.0 cm}
In this paper, we have proposed a joint communication and caching optimization framework for clustered D2D networks. In particular, we have conducted joint optimization of channel access probability and content placement in order to maximize the offloading gain. We have characterized the optimal content caching scheme as a function of the system parameters, namely, density of clusters, average number of devices per cluster, caching scheme, and access probabilities. A bisection search method is also proposed to calculate the optimal channel access probability. We have demonstrated that deviating from the optimal access probability makes file sharing more difficult, i.e., the system is too conservative for small access probabilities, while the interference is too aggressive for larger access probabilities. Results showed up to $10\%$ enhancement in offloading gain for the proposed approach compared to the Zipf caching technique.\footnote{Creating communication protocols for secure content delivery for networks
of UAVs using, e.g., blockchain technology, can be a potential subject for
future investigation \cite{baza5,baza2,baza3,baza6,baza4,baza1,baza7,baza13,baza9,baza11,baza10,baza12,baza8}, 
\cite{8756296,amer2020mobiledrone,8998329},
 \cite{7440654,7605066,7925123}.	} 
\vspace{-0.2 cm}
\begin{appendix}
Laplace transform of the inter-cluster aggregate interference $I_{\Phi_p^{!}}$ can be evaluated as
\begin{align}
&\mathscr{L}_{I_{\Phi_p^{!}}}(s)= \mathbb{E} \Bigg[e^{-s \sum_{\Phi_p^{!}} \sum_{y \in \mathcal{B}^q}  g_{y_{x}}  \lVert x + y\rVert^{-\alpha}} \Bigg] 
  \nonumber \\
    &
    \overset{(a)}{=} \mathbb{E}_{\Phi_p} \Bigg[\prod_{\Phi_p^{!}} \mathbb{E}_{\Phi_{cq}} \prod_{y \in \mathcal{B}^q} \frac{1}{1+s \lVert x + y\rVert^{-\alpha}} \Bigg]  \nonumber \\
     &\overset{(b)}{=} \mathbb{E}_{\Phi_p} \prod_{\Phi_p^{!}} e^{-q\overline{n} \int_{\mathbb{R}^2}\big(1 - \frac{1}{1+s \lVert x + y\rVert^{-\alpha}}\big)f_Y(y)\dd{y}} 	 \nonumber	
\end{align}
\begin{align}
          &\overset{(c)}{=}  e^{-\lambda_p \int_{\mathbb{R}^2}\big(1 -  e^{-q\overline{n} \int_{\mathbb{R}^2}\big(1 - \frac{1}{1+s \lVert x + y\rVert^{-\alpha}}\big)f_Y(y)\dd{y}}\dd{x}  }             	
\nonumber
\end{align}
where (a) follows from the Rayleigh fading assumption, (b) follows from the \ac{PGFL} of Gaussian \ac{PPP} $\Phi_{cq}$, and (c) follows from the \ac{PGFL} of the parent \ac{PPP} $\Phi_p$. By using change of variables $z = x + y$ with $\dd z = \dd y$, we proceed as
\begin{align}			
& \mathscr{L}_{I_{\Phi_p^{!}}}(s)\overset{}{=}  e^{-\lambda_p \int_{\mathbb{R}^2}\Big(1 -  e^{-q\overline{n} \int_{\mathbb{R}^2}\big(1 - \frac{1}{1+s \lVert z\rVert^{-\alpha}}\big)f_Y(z-x)\dd{y}}\Big)\dd{x}}
\end{align}			 
\begin{align}
          &\overset{(d)}{=}  e^{-2\pi\lambda_p \int_{v=0}^{\infty}\Big(1 -  e^{-q\overline{n} \int_{u=0}^{\infty}\big(1 - \frac{1}{1+s  u^{-\alpha}}\big)f_U(u|v)\dd{u}}\Big)v\dd{v}}		 
          \nonumber		\\
         &=  e^{-2\pi\lambda_p \int_{v=0}^{\infty}\Big(1 -  e^{-q\overline{n} \int_{u=0}^{\infty}
         \frac{s}{s+ u^{\alpha}}f_U(u|v)\dd{u}}\Big)v\dd{v}},		 
\end{align}
where (d) follows from converting the cartesian coordinates to the polar coordinates with $u=\lVert z\rVert$. 
\black{To clarify how in (d) the normal distribution $f_Y(z-x)$ is converted to the Rice distribution $f_U(u|v)$, consider a remote cluster centered at $x \in \Phi_p^!$, with a distance $v=\lVert x\rVert$ from the origin. Every interfering device belonging to the cluster centered at $x$ has its coordinates in $\mathbb{R}^2$ chosen independently from a Gaussian distribution with standard deviation $\sigma$. Then, by definition, the distance from such an interfering device to the origin, denoted as $u$, has a Rice distribution, denoted as $f_U(u|v)=\frac{u}{\sigma^2}\mathrm{exp}\big(-\frac{u^2 + v^2}{2\sigma^2}\big) I_0\big(\frac{uv}{\sigma^2}\big)$, where $I_0$ is the modified Bessel function of the first kind with order zero and $\sigma$ is the scale parameter.
Letting $\varphi(s,v) = \int_{u=0}^{\infty}\frac{s}{s+ u^{\alpha}}f_U(u|v)\dd{u}$, the proof is completed.}

\end{appendix}
\bibliographystyle{IEEEtran}

\vspace{-0.2 cm}
\bibliography{public_files/bibliography}

\end{document}